\newcommand{\beq}{\begin{equation}}
\newcommand{\eeq}{\end{equation}}
\newcommand{\ds}{\displaystyle}
\def\bI{\mbox{\boldmath $I$}}
\def\b0{\mbox{\boldmath $0$}}
\def\bp{\mbox{\boldmath $p$}}
\def\bt{\mbox{\boldmath $t$}}
\def\by{\mbox{\boldmath $y$}}
\newtheorem{theorem}{Theorem}
\newenvironment{proof}[1][Proof]{\noindent \textbf{#1.} }{\qedsymbol}
\newcommand{\qedsymbol}{\hspace{\fill}\rule{1.5ex}{1.5ex}}
\title{Joint Optimization of Radio Resources and Code Partitioning in Mobile Edge Computing}
\author{Paolo~Di Lorenzo,~\IEEEmembership{Member,~IEEE}, Sergio~Barbarossa,~\IEEEmembership{Fellow,~IEEE}, and Stefania Sardellitti,~\IEEEmembership{Member,~IEEE},\vspace{-.3cm} 
\thanks{Di Lorenzo is with the Dept. of Engineering, University of Perugia, Via G. Duranti 93, 06125, Perugia, Italy; Email: \texttt{paolo.dilorenzo@unipg.it}. Barbarossa and Sardellitti are with the Department of Information Engineering, Electronics, and Telecommunications, Sapienza university of Rome, Via Eudossiana 18, 00184, Rome, Italy; Email: Email: \texttt{sergio.barbarossa@uniroma1.it}, \texttt{stefania.sardellitti@uniroma1.it}. This work has been supported by TROPIC Project, Nr. ICT-318784.}}
\begin{document}

\maketitle

\begin{abstract} 
The aim of this paper is to propose a computation offloading strategy for mobile edge computing. We exploit the concept of \textit{call graph}, which models a generic computer program as a set of procedures related to each other through a weighted directed graph. Our goal is to derive the optimal partition of the call graph establishing which procedures are to be executed locally or remotely. The main novelty of our work is that the optimal partition is obtained jointly with the selection of radio parameters, e.g., transmit power and constellation size, in order to minimize the energy consumption at the mobile handset, under a latency constraint taking into account transmit time and execution time. We consider both single and multi-channel transmission strategies and we prove that a globally optimal solution can be achieved in both cases. Finally, we propose a sub-optimal strategy aimed at solving a relaxed version of the original problem in order to tradeoff complexity and performance of the proposed framework. Finally, several numerical results illustrate under what conditions in terms of call graph topology, communication strategy, and computation parameters, the proposed offloading strategy provides large performance gains.
\end{abstract}

\vspace{-.5cm}
\section{Introduction}

Computation offloading has attracted a lot of research efforts as a way to augment the capabilities of resource-constrained and energy-hungry mobile handsets by migrating computation to more resourceful servers. Offloading is useful either to enable smartphones to run more and more sophisticated applications, while meeting strict delay constraints, or to prolong the battery lifetime by limiting the energy spent at the mobile handset to run a given application under application-dependent delay constraints. In the current mobile handset market, user demand is increasing for several categories of energy-hungry applications. Video games are becoming more and more popular and they have large energy requests. Mobile users are also increasingly watching and uploading streaming video from YouTube \cite{Lewin1}. Furthermore, mobile devices are increasingly equipped with new sensors that produce continuous streams of data about the environment. New applications relying on continuous processing of the collected data are emerging, such as car navigation systems, pedometers, and location-based social networking, just to name a few. All these applications are severely energy demanding for today's mobile handsets. On the other hand, the trend in battery technology makes it unlikely that the energy problem will disappear in the near future. Indeed, recent researches on the adoption of new battery technologies, e.g., new chemicals, are unlikely to significantly improve battery lifetime \cite{Powers,Palacin}. A possible strategy to overcome the above energy/computation bottleneck consists in enabling resource-constrained mobile devices to offload
their most energy-consuming tasks to nearby more resourceful servers. This strategy has a long history and is reported
in the literature under different names, such as cyber foraging \cite{Sharifi}, or computation offloading \cite{Kumar-Liu-Lu-Bhargava}.
A great impulse to computation offloading has come through the introduction of cloud computing \cite{Hayes}. One of the key features of cloud computing is {\it virtualization}, i.e. the capability to run multiple operating systems and multiple applications over the same machine (or set of machines), while guaranteeing isolation and protection of the programs and their data. In particular, mobile cloud computing (MCC) \cite{Dinh-Lee-Niyato-Wang}, \cite{Fernando-Loke-Rahayu}, \cite{SB-SS-PDL2}, makes possible for mobile users to access cloud resources, such as infrastructures, platforms, and software, on-demand through a wireless access.
Several works addressed mobile computation offloading, such as \cite{Li-Wang-Xu,Wang-Li,Yang-Ou-Chen,Chun-Maniatis,Chun-Ihm-Maniatis-Naik,Tang,Kwon,Gao-He-Liu-Li-Jarvis,Wen-Zhang-Luo,Barb-Sard-Dilo,Barb-Dilo-Sard}. The offloading decisions are usually made by analyzing parameters including bandwidths \cite{Wolsky-Gurun-Krintz-Nurmi}, server speeds \cite{Kao}, available memory, server loads \cite{Barb-Sard-Dilo}, and the amounts of data exchanged between servers and mobile systems. The solutions include {\it code partitioning} programs \cite{Li-Wang-Xu,Wang-Li,Wen-Zhang-Luo, Maui,Chen-Kang-Kandemir-Vijaykrishnan-Irwin-Chandramouli, Hong-Kumar-Lu, Nimmagadda-Kumar-Lee, Ou-Yang-Liotta-Lu, Xian-Lu-Li,Liu,Yang13}, and predicting parametric variations in application behavior and execution environment \cite{Wolsky-Gurun-Krintz-Nurmi,Huerta-Canepa-Lee}. Offloading typically requires code partitioning \cite{Liu}, aimed to decide which parts of the code should run locally and which parts should be offloaded, depending on contextual parameters, such as computational intensity of each module, size of the program state to be exchanged to transfer the execution from one site to the other, battery level, delay constraints, channel state, and so on. Program partitioning may be {\it static} or {\it dynamic}. Static partitioning has the advantage of requiring a low overhead during execution, but it works well only if the parameters related to the offloading decisions are accurately known in advance or predicted, see e.g. \cite{Li-Wang-Xu,Kumar-Lu, Balan,Chen-Kang-Kandemir}. In contrast, enforcing a dynamic decision mechanism makes possible to adapt the method to different operating conditions and to cope with different degrees of uncertainties \cite{Wang-Li,Kwon,Xian-Lu-Li,Huerta-Canepa-Lee,Kovachev-Yu-Klamma,Huang-Wang-Niyato}. A dynamic approach to computation offloading consists in establishing a rule to decide which part of the computations may be advantageously offloaded, depending on channel conditions, server state, delay constraints, and so on. Specific examples of dynamic mobile computation offloading techniques are: MAUI \cite{Maui}, ThinkAir \cite{Kosta_2012}, and Phone2Cloud \cite{Xia2014}. Finally, mobile cloud computing strategies in a multiuser scenarios were considered in \cite{SB-SS-PDL2},\cite{SB-SS-PDL,Labidi,Sard-Scut-Barb,Chen,Cardellini}.


One of the limitations of today mobile cloud computing is the latency experienced in the propagation of information through a wide area network (WAN). As pointed out in \cite{Maui}, \cite{Satyanarayanan-Bahl-Caceres-Davies}, \cite{Barbera-Kosta-Mei-Stefa}, humans are acutely sensitive to delay and jitter, and it is very difficult to control these parameters at WAN scale: As latency increases, interactive response suffers. Furthermore, mobile access through macro base stations (MBS) might require large power consumptions.
To overcome this limitation, the authors of \cite{Satyanarayanan-Bahl-Caceres-Davies} introduced the concept of {\it cloudlet}, i.e. the possibility for the mobile handsets to access nearby static resourceful computers, linked to a distant cloud through high speed wired connections.  In \cite{Satyanarayanan-Bahl-Caceres-Davies}, cloudlets are imagined to be deployed much like Wi-Fi access points today. Comparisons of energy expenditure for computation offloading using 3G and Wi-fi systems are given in \cite{Maui} and \cite{Barbera-Kosta-Mei-Stefa}, where it is shown how the benefits of offloading are most impressive when the remote server is nearby, such as on the same LAN as the Wi-Fi access point, rather than using 3G technology. Within the European project TROPIC \cite{TROPIC} we proposed to endow small cell LTE base stations, also known as Small Cell e-Node B (SCeNB), with, albeit limited, cloud functionalities. In this way, one can bring computing resources very close to the mobile user equipments (MUE). Considering the dense deployment of radio access points (RAP) foreseen in 5G, if these RAP are endowed with computing capabilities, there is the possibility of a very effective provision of computing capabilities in close proximity to the mobile user, with advantages over Wi-Fi access in terms of Quality-of-Service, through a single technology system (no need for the MUE's to switch between cellular and Wi-Fi standards). Merging information technologies, wireless access and cloud computing within a holistic framework is indeed one of the key steps for an effective 5G deployment, as evidenced by the new standardization group on the so called Mobile-Edge Computing (MEC), recently launched by the European Telecommunications Standards Institute (ETSI). The aim of MEC is precisely to bring computing capabilitites at the edge of the network,  in close proximity to mobile subscribers, thus offering a service environment characterized by very low latency and high rate access \cite{ETSI}.
%

\textbf{Contribution:} Our goal is to propose a computation offloading method operating in the aforementioned MEC framework. The key basic assumption, justified within the MEC framework, is that the allocation of radio resources takes into account features pertaining to the application layer. The proposed approach optimizes code partitioning and radio resource allocation {\it jointly}, in order to minimize the overall energy consumption at the mobile handset, under a latency constraint dictated by the application requirements. To find the optimal code partitioning, we exploit the concept of \textit{call graph} of a program \cite{Ryder,Grove-Chambers}, which models the relations between procedures  of a computer program through a directed acyclic graph. Our objective is to find the code partition and the radio resource allocation (power and constellation size) that minimize the energy consumption at the mobile side, under a latency constraint that incorporates communication and computation times. We consider both a single channel and a multi-channel transmission strategies. We provide theoretical results proving the existence of a unique solution of the resource allocation problem. The theoretical findings are then corroborated with numerical results, illustrating  under what kind of call graph's topologies, radio channel conditions, and communication strategies, the proposed method can provide a significant energy saving at the mobile handset. In principle, the joint optimization problem has combinatorial complexity. However, we first illustrate that several graph configurations can be apriori discarded depending on the current channel value, thus reducing the complexity burden of the overall approach. Furthermore, we also introduce a sub-optimal strategy based on a recently proposed successive convex approximation (SCA) framework \cite{Scutari_ICASSP14,Scutari_nonconvex}, whose aim is to solve a relaxed version of the original optimization problem. As proved by simulation results, this method makes possible a remarkable reduction of complexity while, at the same time, providing similar performance as the (global) optimal solution.

\begin{figure}[t]
\centering
\includegraphics[width=7cm]{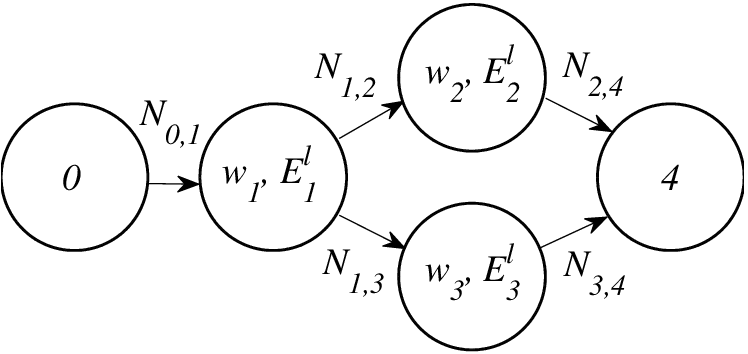}
  \caption{Example of extended call graph.}\label{Call_graph}
\end{figure}

\vspace{-.25cm}
\section{Call Graph}

In this section, before introducing the proposed framework, we recall the notion of call graph of a program \cite{Ryder,Grove-Chambers}, which will be instrumental to design an offloading strategy that decides to offload only some modules of the program, depending on the computation requirements of each module, the size of the program state that needs to be exchanged to transfer the execution from one site to the other, and the channel conditions.

A \emph{call graph} is a useful representation that models the relations between procedures of a computer program into the form of a directed graph ${\cal G}=(\mathcal{V},\mathcal{E})$. The call graph represents the call stack as the program executes. Each vertex $v\in \mathcal{V}$ represents a procedure in the call stack, and each directed edge $e_{u,v}=(u,v)$ represents an invocation of procedure $v$ from procedure $u$. The call graph contains all the relationships among the procedures in a program and, in general, it includes auxiliary information concerning for instance the number of CPU cycles required for running each procedure and the global data shared among procedures. For non recursive languages with reasonable assumptions on the program structure \cite{Ryder}, the call graph is a directed, acyclic graph. In our computation offloading framework, we label each vertex $v\in \mathcal{V}$ with the energy  $E^l_v$ it takes to execute the procedure locally, and with the overall number of CPU cycles $w_v$ associated with the procedure. At the same time, each edge $e_{u,v}=(u,v)$ is characterized by a label describing the number of bits $N_{u,v}$ representing the size of the program state that needs to be exchanged to transfer the execution from node $u$ to node $v$. In general, some procedures cannot be offloaded, like, e.g., the program modules controlling the user interface or the interaction with I/O devices.
We denote the set of procedures that must be executed locally by $\mathcal{V}_l$. To incorporate in our model the fact that the program initiates and terminates at the mobile handset, we consider an extended vertex set $\mathcal{V}_e=\{0,1,\ldots,{\rm card}(\mathcal{V}),{\rm card}(\mathcal{V})+1\}$, where we have introduced the auxiliary nodes 0 and ${\rm card}(\mathcal{V})+1$. These two additive nodes do not introduce further execution energies, but are instrumental to introduce an extended edge set $\mathcal{E}_e$, which comprises all the edges of the call graph, the edges from the initiating node 0 to the call graph, and the edges from the call graph to the terminating node ${\rm card}(\mathcal{V})+1$. The graph ${\cal G}_e=(\mathcal{V}_e,\mathcal{E}_e)$ will be referred to as the extended call graph of the program. To give an example, in Fig. \ref{Call_graph} we illustrate an extended call graph, composed of five nodes and five edges.

\vspace{-.1cm}
\section{Joint Optimization in the Single Channel Case}

The goal of this section is to formulate an optimization problem aimed at determining which modules of an application's call graph should be executed locally and which ones should be executed remotely. Intuitively speaking, the modules more amenable for offloading are the ones requiring intensive computations and limited exchange of data to transfer the execution from one site to the other. Our goal now is to make this intuition the result of an optimization procedure. To this end, we formulate the offloading decision problem jointly
with the selection of the transmit power and the constellation size used for transmitting the program state necessary to transfer the execution from the mobile handset to the cloud or viceversa. The objective is to minimize the energy consumption at the mobile site, under a power budget constraint, and a latency constraint taking into account the time to transfer the execution and the time necessary to execute the module itself. This constraint is what couples the computation and communication aspects of the problem. We assume that the set of instructions to be executed is available at both MUE and SCceNB. If not, they are supposed to be downloaded by the SCceNB through a high speed wired link.

Let us indicate with $\{I_v\}_{v=0}^{{\rm card}(\mathcal{V})+1}$ the indicator variable, which is equal to one, if the procedure $v$ of the extended call graph is executed remotely, or zero, if it is executed locally. Of course, by the definition of the extended call graph, we have $I_0=I_{{\rm card}(\mathcal{V})+1}=0$. We also denote by $p_{u,v}$ the power spent to transmit the program state between the procedures $u$ and $v$. The set of all powers is collected in the vector $\bp=\{p_{u,v}\}_{(u,v)\in \mathcal{E}_e}\in \mathbb{R}^{{\rm card}(\mathcal{E}_e)}$, where ${\rm card}(\mathcal{E}_e)$ is the cardinality of set $\mathcal{E}_e$.
Thus, the optimal allocation of radio parameters and code partitioning can be obtained as the solution of the following optimization problem:
\begin{align}
&[{P.1}]\hspace{.2cm}\min_{\mathbf{I},\bp}  \quad \sum_{v\in \mathcal{V}} (1-I_v)\cdot E_v^l + \sum_{(u,v)\in \mathcal{E}_e} \big[J_{u,v}(p_{u,v})I_v  \nonumber\\
&\hspace{1.5cm}+\varepsilon_{u,v}I_u-(J_{u,v}(p_{u,v})+\varepsilon_{u,v})I_uI_v\big]\label{Integer_Linear_Program}\\
&\hbox{subject to}\nonumber\\
&\; \sum_{v\in \mathcal{V}}\left[(1-I_v)T_v^l+I_vT_v^r\right]+ \sum_{(u,v)\in \mathcal{E}_e} \big[D_{u,v}(p_{u,v})I_v \nonumber\\
&\hspace{1cm}+\gamma_{u,v}I_u-(D_{u,v}(p_{u,v})+\gamma_{u,v})I_uI_v\big] \leq L \label{Integer_Linear_Program2}\\
&\quad \quad  I_v\in\{0,1\}, \quad \quad I_v=0, \quad \forall v\in \mathcal{V}_l, \label{Integer_Linear_Program3}\\
&\quad \quad \; 0\leq p_{u,v} \leq P_T \cdot I_v(1-I_u) \quad \forall\; (u,v)\in \mathcal{E}_e,\label{Integer_Linear_Program4}
\end{align}
where $\mathbf{I}=\{I_{v}\}_{v=0}^{{\rm card}(\mathcal{V})+1}$. The objective function in (\ref{Integer_Linear_Program}) represents the total energy spent by the MUE for executing the application. In particular, the first term in (\ref{Integer_Linear_Program}) is the sum (over all the vertices of the call graph, i.e. $v\in \mathcal{V}$) of the energies spent for executing the procedures locally, whereas the second term is the sum (over the edges of the extended call graph, i.e. $(u,v)\in \mathcal{E}_e$) of the energies spent to transfer the execution from the MUE to the SCceNB, or viceversa. In particular, the quantity $J_{u,v}(p_{u,v})$ represents the energy needed to transmit the program state $N_{u,v}$ from the MUE to the SCceNB, whereas $\varepsilon_{u,v}$ is the cost needed by the MUE to decode the $N_{u,v}$ bits of the program state transmitted back by the SCceNB. The cost $\varepsilon_{u,v}$ is not a function of the MUE's transmitted power and it depends only on the size of the program state $N_{u,v}$. Note that an energy cost to transfer the execution from the MUE to the SCceNB occurs in (\ref{Integer_Linear_Program}) only if the two procedures $u$ and $v$ are executed at different locations, i.e. $I_u\neq I_v$. More specifically, if $I_u=0$ and $I_v=1$, the energy cost is equal to the energy $J_{u,v}(p_{u,v})$ needed to transmit the program state $N_{u,v}$ from the MUE to the SCceNB, whereas, if $I_u=1$ and $I_v=0$, the cost is equal to the energy $\varepsilon_{u,v}$ needed by the MUE to decode the $N_{u,v}$ bits of the program state transmitted by the SCceNB. Now, assuming an adaptive modulation scheme that selects the QAM constellation size as a function of channel conditions and computational requirements, the minimum time $D_{u,v}(p_{u,v})$ necessary to transmit $N_{u,v}$ bits of duration $T_b$ over an additive white Gaussian noise (AWGN) channel is
\begin{equation}\label{Delta_tx}
D_{u,v}(p_{u,v})=\frac{N_{u,v} T_b}{\displaystyle\log_2\left(1+a p_{u,v}\right)},
\end{equation}
where the denominator represents the number of bits/symbol, $p_{u,v}$ is the transmit power, and $a$ is the normalized channel coefficient given by:
\begin{equation}\label{norm_channel_coeff}
a=\frac{\alpha^2\cdot PL(d)}{\Gamma(\mathrm{BER}) N_0},
\end{equation}
where $PL(d)$ denotes the path-loss ($d$ is the distance between MUE and SCceNB), $\alpha^2$ is the channel fading coefficient (normalized to distance), $N_0$ denotes the noise power, and $\displaystyle \Gamma(\mathrm{BER})$ represents the so called SNR margin, introduced to meet the desired target bit error rate (${\rm BER}$) with a QAM constellation.  Expression (\ref{Delta_tx}) shows that offloading may be advantageous if the distance $d$ between MUE and SCceNB is sufficiently small, so that the transmission rate is high.
Thus, exploiting (\ref{Delta_tx}), the energy $J_{u,v}(p_{u,v})$, associated to the transfer of the program state $N_{u,v}$, is given by
\begin{equation}\label{Energy1}
J_{u,v}(p_{u,v})=p_{u,v}D_{u,v}(p_{u,v}).
\end{equation}
The constraint in (\ref{Integer_Linear_Program2}) is a latency constraint and it contains two terms: the first term is the time needed to compute the procedures locally (i.e. $I_v=0$), or remotely (i.e. $I_v=1$), and the second term is the delay resulting from transferring the program state from one site (e.g., the MUE) to the other (e.g., the SCceNB). The constant $L$ represents the maximum latency dictated from the application. In particular, we denote by $T^l_v$ and  $T^r_v$ the time it takes to execute the program module locally or remotely, respectively. They are directly related to $w_v$ through the following expressions $T^l_v=w_v/f_l$, and $T^r_v=w_v/f_s$, where $f_l$ and $f_s$ are the CPU clock speeds (cycles/seconds) at the MUE and at the SCceNB, respectively.
The quantity $D_{u,v}(p_{u,v})$, given by (\ref{Delta_tx}), represents the delay needed to transmit the program state $N_{u,v}$ from the MUE to the SCceNB, whereas $\gamma_{u,v}$ is the time needed by the MUE to decode the $N_{u,v}$ bits of the program state transmitted back by the SCceNB. From (\ref{Integer_Linear_Program2}), we note that no delay occurs if the two procedures $u$ and $v$ are both executed in the same location, i.e. $I_u=I_v$. Furthermore, if $I_u=0$ and $I_v=1$, the delay is equal to the time $D_{u,v}(p_{u,v})$ in (\ref{Delta_tx}) needed to transmit the program state $N_{u,v}$ from the MUE to the SCceNB,  whereas, if $I_u=1$ and $I_v=0$, the delay is equal to the time $\gamma_{u,v}$ needed by the MUE to get the $N_{u,v}$ status bits back from the SCceNB. The term $\gamma_{u,v}$ is not a function of the MUE's transmitted power and it depends only on the size of the program state $N_{u,v}$ to be transferred. The constraint in (\ref{Integer_Linear_Program3}) specifies that the variables $I_v$ are binary and that, for all procedures contained in the set $V_l$, which is the set of procedures that are to be executed locally, $I_v=0$. The last constraint, in (\ref{Integer_Linear_Program4}), is a power budget on the maximum transmit power, which is equal to $P_T$ if $I_u=0$ and $I_v=1$, and zero in all other cases.\smallskip

\textbf{Remark 1:} Since the state variables $I_v$ are integer, $[P.1]$ is a mixed integer programming problem \cite{Nemhauser-Wolsey}. Hence, its solution would require to explore all possible combinations in $I$, solve for the power vector $\bp$, and then compare the final values of the objective function in order to choose the best configuration. However, the number of combinations grows exponentially with the number of vertexes that can be offloaded in the call graph. Furthermore, given a call graph's combination, the resulting optimization problem in $\bp$ might be nonconvex, i.e. multiple local minima may exist. Nevertheless, a series of simplifications are possible in order to reduce the complexity of the overall algorithm. The first useful remark is that, with respect to the integer variables $I_v$, the problem is a {\it quadratic} (binary) integer programming problem. This remark is useful because there exist efficient algorithms, such as the branch and bound algorithm for example, to solve a quadratic integer problem with reduced complexity \cite{Nemhauser-Wolsey}. A further important simplification comes from observing that, for any set of integer values $I_v$, the remaining optimization over the power coefficients is a (\textit{strictly}) \textit{convex} problem. This means that its solution is unique and it can be achieved with very efficient numerical algorithms. Of course, the overall complexity depends on the granularity of the call graph construction: A fine-grain model provides better performance, but with a high cost; conversely, a coarse-grain call graph provides worse performance, but with lower complexity. \qedsymbol

\subsection{Joint Optimization Algorithm}

Before introducing the proposed algorithm, we prove the previous statements regarding the convexity of the power allocation problem. Let us consider a generic combination $c\in \mathcal{C}$, where $\mathcal{C}$ is the set of all possible combinations of the binary variables $I_v$, $v\in \mathcal{V}_e$. For each combination $c$, the value of the variables $I_v$ is fixed to some value $I^c_v$, and problem $[P.1]$ becomes a radio resource allocation problem where we are attempting to minimize the energy spent for transmission with a constraint on the transmission delay, i.e.,
\begin{align} \label{Opt_problem2}
\hspace{-.5cm}[{P.2}]\hspace{.5cm} \min_{\displaystyle \bp_c} & \quad \sum_{(u,v)\in \mathcal{E}_{e,c}}\frac{N'_{u,v}p_{u,v}}{\displaystyle\log\left(1+a p_{u,v}\right)} \nonumber\\
& \hbox{s.t.} \quad \sum_{(u,v)\in \mathcal{E}_{e,c}} \frac{N'_{u,v}}{\displaystyle\log\left(1+a p_{u,v}\right)}\leq L_c \nonumber\\
& \quad \quad \; 0\leq p_{u,v} \leq P_T, \quad\quad \forall\; (u,v)\in \mathcal{E}_{e,c}, \nonumber
\end{align}
where, for any combination $c$,  $\mathcal{E}_{e,c}$ is the set of edges $(u,v)\in \mathcal{E}_e$ for which $I^c_u=0$ and $I^c_v=1$;
$\bp_c=\{p_{u,v}\}_{(u,v)\in \mathcal{E}_{e,c}}$ is vector containing all the powers; $N'_{u,v}=N_{u,v}T_b$
and
\begin{equation}\label{L1}
L_c=L-\sum_{v\in \mathcal{V}}\left[(1-I^c_v)\frac{w_v}{f_l}+I^c_v\frac{w_v}{f_s}\right]-\sum_{(u,v)\in \mathcal{E}'_{e,c}}\gamma_{u,v}
\end{equation}
where $\mathcal{E}'_{e,c}$ is the set of edges $(u,v)\in \mathcal{E}_{e}$ for which $I^c_u=1$ and $I^c_v=0$. Going from $[P.1]$ to $[P.2]$, we have eliminated all the terms that do not depend explicitly on the transmit powers, since they do not affect the optimization problem $[P.2]$. The problem in $[P.2]$ is still nonconvex because the objective function is actually concave in the power allocation vector $\bp_c$. Nevertheless, we can prove the following result: \smallskip
\textit{\begin{theorem}
If the following conditions
\begin{equation}\label{feasibility}
L_c>0 \quad\ \hbox{and} \quad a\geq \bar{a}_c=\frac{e^{\displaystyle \frac{\sum_{(u,v)\in \mathcal{E}_{e,c}}N'_{u,v}}{L_c}}-1}{P_T}
\end{equation}
are satisfied, then:
\begin{description}
  \item[(a)] the feasible set of problem $[P.2]$ is non-empty;
  \item[(b)] Problem $[P.2]$ is equivalent to the convex optimization problem $[P.7]$ (see Appendix A);
\end{description}
\end{theorem}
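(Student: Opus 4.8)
The plan is to establish (a) by exhibiting an explicit feasible point and (b) by a monotone per-edge change of variables that convexifies the program.

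For (a), the natural candidate is the maximum-power allocation $p_{u,v}=P_T$ for every $(u,v)\in I_c$, which trivially meets the box constraint $0<p_{u,v}\le P_T$. Substituting it into the delay constraint of $[P.2]$ reduces the latter to $\big(\sum_{(u,v)\in I_c}N'_{u,v}\big)/\log(1+aP_T)\le L'_c$. Since $a>0$ by (\ref{norm_channel_coeff}) and $L'_c>0$ by hypothesis, $\log(1+aP_T)$ is strictly positive, so this inequality is equivalent to $1+aP_T\ge \exp\!\big(\sum_{(u,v)\in I_c}N'_{u,v}/L'_c\big)$, i.e. to $a\ge\bar a_c$. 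Hence (\ref{feasibility}) is exactly the condition under which the maximum-power point is feasible, which proves (a).

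For (b), I would introduce the per-edge variable $r_{u,v}=\log(1+ap_{u,v})$, equivalently $p_{u,v}=(e^{r_{u,v}}-1)/a$. Since $a>0$, the map $p_{u,v}\mapsto r_{u,v}$ is a strictly increasing bijection of $(0,P_T]$ onto $(0,\log(1+aP_T)]$; applied coordinatewise it puts the feasible points of $[P.2]$ in one-to-one correspondence with those of the transformed problem and preserves the objective value, which is the precise sense in which the two problems are equivalent. Under the substitution the objective becomes $\tfrac{1}{a}\sum_{(u,v)\in I_c}N'_{u,v}\,\tfrac{e^{r_{u,v}}-1}{r_{u,v}}$, the delay constraint becomes $\sum_{(u,v)\in I_c}N'_{u,v}/r_{u,v}\le L'_c$, and the power budget becomes the box $0<r_{u,v}\le\log(1+aP_T)$.

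It then remains to verify convexity, which is the only delicate point. The box constraint is linear, and $r\mapsto 1/r$ is convex on $(0,\infty)$, so the delay constraint defines a convex set. For the objective, the key identity is $\tfrac{e^{r}-1}{r}=\int_0^1 e^{tr}\,dt$: each $r\mapsto e^{tr}$ with $t\in[0,1]$ is convex, and a nonnegatively weighted combination (here an integral) of convex functions is convex, so $r\mapsto (e^{r}-1)/r$ is convex on $(0,\infty)$ and the objective, a nonnegative-weighted sum of such terms, is convex. Thus the transformed problem is a convex program; possibly after an epigraph reformulation and a harmless positive rescaling, this is precisely the convex problem $[P.5]$ of Appendix A, and by the bijection above it is equivalent to $[P.2]$. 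The main obstacle is exactly the convexity of $(e^{r}-1)/r$: a brute-force second-derivative check is feasible but unpleasant, whereas the integral representation makes it immediate; the remaining ingredients — the explicit maximum-power feasible point and the monotonicity of the substitution — are routine.
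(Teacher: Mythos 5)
Your proposal is correct and follows the same overall skeleton as the paper's Appendix A: part (a) is proved identically, by checking that the maximum-power point $p_{u,v}=P_T$ satisfies the delay constraint precisely when $a\geq\bar a_c$, and part (b) uses the same substitution $t=\log(1+ap)$ to arrive at $[P.5]$. The one place where you genuinely diverge is the verification that $r\mapsto (e^{r}-1)/r$ is convex, which is the crux of the argument: the paper computes the second derivative explicitly, reduces its sign to that of $g(t)-2$ with $g(t)=e^{t}(t-1)^2+e^{t}$, and shows $g$ increases from its limit value $2$ at $0^{+}$, thereby obtaining \emph{strict} convexity; you instead use the integral representation $(e^{r}-1)/r=\int_0^1 e^{tr}\,dt$, which makes plain convexity immediate and is considerably cleaner. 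Be aware, though, that the paper needs strict convexity (it is invoked in Remark 2 to guarantee a \emph{unique} global minimizer), and your argument as written only delivers convexity; the upgrade is easy --- $e^{tr}$ is strictly convex in $r$ for every $t\in(0,1]$, and integrating over a set of positive measure preserves strict convexity --- but you should say so explicitly. Your parenthetical about an ``epigraph reformulation'' is unnecessary: $[P.5]$ is exactly the transformed problem with the rescaled weights $N''_i=N'_i/a$, with no epigraph step involved.
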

\begin{proof}
See Appendix A. \smallskip
\end{proof}}

\textbf{Remark 2:} By virtue of Theorem 1 and the results in Appendix A, the delay constrained energy minimization problem in $[P.2]$ is equivalent to a convex problem with strictly convex objective function, see $[P.7]$. Thus, if the feasible set is non-empty, the problem admits a unique global solution that can be found using numerically efficient algorithms \cite{Boyd}. This is a very important result, since it enables the possibility to find the global optimum solution of $[P.1]$ through a search on the \textit{useful} call graph configurations. Indeed, the feasibility conditions (\ref{feasibility}) enables to discard several call graph configurations that cannot afford offloading, thus reducing the complexity burden of the algorithm. The computation of the useful graph configurations is very easy. Indeed, since $L_c$ in (\ref{L1}) and the threshold values $\bar{a}_c$ in  (\ref{feasibility}) are fixed coefficients depending on the call graph structure, the delay constraint $L$, and the power budget constraint $P_T$, these parameters must not be computed on the fly but they can be precomputed and stored in the memory. Thus, when computation offloading is requested, the selection of the useful graph configuration can be done by simply checking the conditions in (\ref{feasibility}), given the current equivalent channel $a$ between the MUE and the SCceNB. The search will then be limited only to the useful call graph configurations.\qedsymbol

\textbf{Remark 3:} The problem formulation in $[P.1]$ extends the MAUI strategy from \cite{Maui}, where it was proposed an integer linear program (ILP) aimed at finding the optimal program partitioning strategy that minimizes the MUE's energy consumption, subject to a latency constraint. Differently from our work, in \cite{Maui} the quantities $D_{u,v}(p_{u,v})$ and $J_{u,v}(p_{u,v})$ were assumed to be constant (and known), whereas in our case those quantities are optimized jointly with the program partition, acting on the transmit power and the constellation size. This introduces the radio aspect in the problem, and leads to a further gain in terms of energy saving that can be achieved through computation offloading. From a complexity point of view, our approach and MAUI are both NP-hard problems. Nevertheless, thanks to the check of the feasibility conditions in (\ref{feasibility}), which depend on the radio aspect of the problem, several graph configurations can be apriori discarded depending on the current channel value. Thus, our approach might largely reduce the overall complexity burden of computation offloading, with respect to the MAUI strategy. \qedsymbol

The main steps of the algorithm that solves the joint optimization of radio resources and code partitioning are summarized in Algorithm 1.

\begin{algorithm}[t]

\begin{flushleft}
\justify

\normalsize

\vspace{.15cm}
(\texttt{S.1}): Given the equivalent channel $a$, select the set of combinations $\mathcal{C}$ that satisfy the feasibility conditions in (\ref{feasibility});

\vspace{.15cm}
\noindent (\texttt{S.2}): For every combination $\bI_c=\{I_v^c\}_{v=0}^{{\rm card}(\mathcal{V})+1}$, $c\in\mathcal{C}$, compute the optimal power vector $\bp^*_c$ by solving the convex problem $[P.7]$ (Appendix A). Then, using (\ref{Integer_Linear_Program}), evaluate the overall energy $E_c(\bI_c,\bp^*_c)$ spent to run the application, for the given combination $c$;

\vspace{.15cm}
\noindent (\texttt{S.3}):  Select the pair $\displaystyle (\bI^*_c,\bp^*_c)=\arg\min_{c\in \mathcal{C}} E_c(\bI_c,\bp^*_c)$.

\caption{\textbf{:} Joint Optimization Algorithm }

\end{flushleft}

\end{algorithm}

\vspace{-.2cm}
\subsection{Numerical Results}

In this section, we provide some numerical results to assess the tradeoff between performance and complexity of the proposed offloading strategy.

\noindent{\it Numerical Example 1 - Performance:} The performance of the proposed offloading strategy  is affected by several factors. The first important factor is the call graph topology and its parameters. As an example, let us consider three different examples of call graphs, as illustrated in Fig. \ref{Graph}. To grasp the dependence of the performance on the graph's parameters, we consider for simplicity three directed call graphs having the same topology, but different parameters. In Fig. \ref{Graph}, nodes 0 and 7 represent the auxiliary nodes of the extended call graph, and are useful to model the user interface. All other nodes of the graphs (from 1 to 6) can be offloaded, if needed. All the nodes in the three graphs have the same set of energies $E^l_v$ (Joule) and number of CPU cycles $w_v$ (Mcycles), associated to the local computation of a method. In particular, $E^l_1=4.2$, $E^l_2=5.1$, $E^l_3=6.7$, $E^l_4=8.4$, $E^l_5=0.5$, $E^l_6=0.4$, and $w_1=70.8$, $w_2=42.5$, $w_3=95.3$, $w_4=158.6$. $w_5=18.6$, $w_6=86.4$. What is different from one graph to the other is the number $N_{u,v}$ of Kbit associated to the program state at each edge of the extended call graph, as we can see in Fig. \ref{Graph}.

\begin{figure}[t]
\centering
\includegraphics[width=7cm]{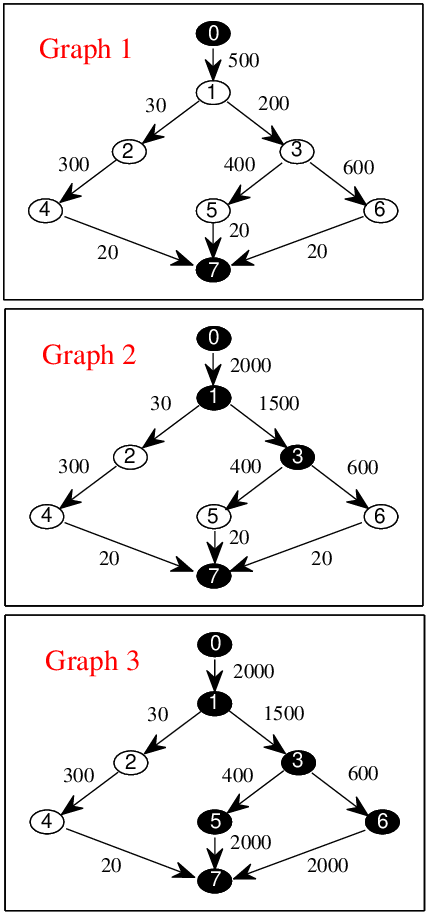}
  \caption{Examples of optimal graph's partitions.}\label{Graph}
\end{figure}

The colors associated to the nodes in Fig. \ref{Graph} denote the final result of the proposed joint optimization algorithm for the specific graph. In particular, the black color is assigned to those nodes for which the optimization strategy has decided to run the corresponding procedure locally, whereas the white color is assigned to the procedures that are offloaded to the SCceNB. The latency constraint $L$ in $[P.1]$ is chosen equal to the time needed by the MUE to compute the entire program locally, and the power budget constraint is $P_T=0.01$ Watt. The local CPU clock speed is chosen as $f_l=10^8$ cycles/s, whereas the server speed is equal to $f_s=10^{10}$ cycles/s. The normalized channel coefficient $a$ in (\ref{norm_channel_coeff}) is set equal to 500, and the bit duration $T_b=10^{-6}$.
As we can notice from Fig. \ref{Graph}, slightly different call graphs (only in terms of number of bits $N_{u,v}$) lead to completely different results. In the first case (Graph 1), all the nodes (except the auxiliary nodes, of course) are offloaded to the server. This happens because the overall energy that the MUE should have spent for computing the entire application locally is much greater than the energy needed to transmit the initial state of size $N_{01}=500$ KByte.  In this case, it might be that, for each individual method after node 2, its remote execution is more expensive than its local execution, and yet remote execution can save energy by offloading the entire program. It is indeed important to remark that the result of the optimization is globally optimal (i.e., across the entire program) rather than locally optimal (i.e., relative to a single method invocation).  Let us consider now the case of Graph 2 in Fig. \ref{Graph}, where the size $N_{01}$ and $N_{13}$ of the program state are larger than before. In this case, the optimal solution is the one shown in Fig. \ref{Graph} (Graph 2), where only four methods are offloaded. This happens because, from an energetic point of view, it has become non convenient to transmit the large states $N_{01}$ or $N_{13}$ to the SCceNB. Furthermore, once a computation is offloaded, the result must be transmitted back to the SCceNB. If the result yields too many bits, it may not be convenient to offload that procedure. This is exactly what happen in the case of Graph 3 in Fig. \ref{Graph}. Indeed, increasing the values of the output size of procedures 5 and 6 with respect to the case of Graph 2, the optimal solution changes as shown in Fig. \ref{Graph} (Graph 3).

\begin{figure}[t]
\centering
\includegraphics[width=7.5cm]{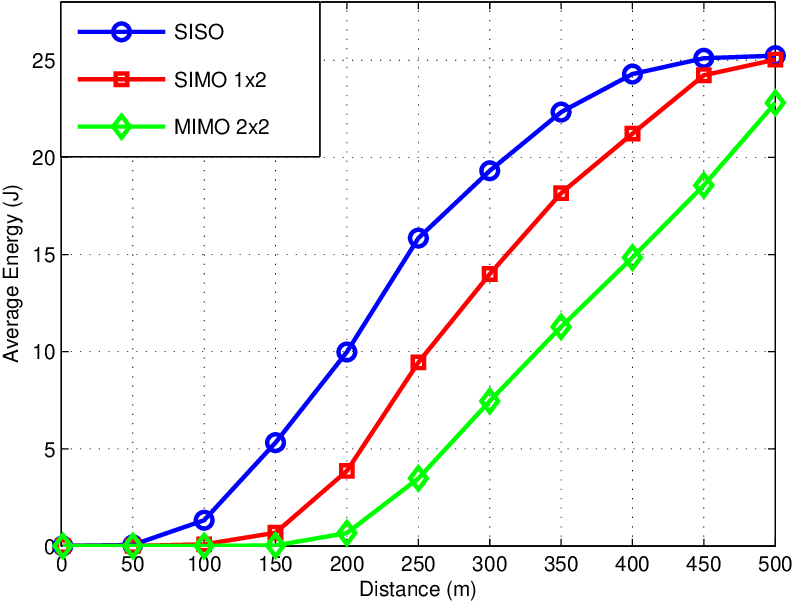}
  \caption{Average energy spent for processing versus distance (from MUE to SCceNB), for different communication strategies.}\label{Energy_distance}
\end{figure}

A fundamental factor affecting the performance of the joint optimization algorithm is the wireless fading channel between MUE and SCceNB. For this reason, we check the performance of the optimization algorithm as a function of the distance between MUE and SCceNB, considering a Multiple-Input-Multiple-Output (MIMO) transmission scheme. In Fig. \ref{Energy_distance}, we report the average energy spent for processing versus the distance between MUE and SCceNB, for different MIMO configurations. The results are averaged over 200 independent realizations.
Assuming the presence of $M$ statistically independent spatial channels, the fading coefficient $\alpha$ is a random variable, whose probability density function (pdf) depends on the MIMO communication strategy, with $\bar{\alpha}$ denoting the variance over the single channel coefficient \cite{Barbarossa}. In particular, in Fig. \ref{Energy_distance}, we assume the use of SISO, 1x2 SIMO, and 2x2 MIMO communication strategies. The BER is chosen equal to $10^{-3}$, the path-loss coefficient $PL(d)$ is chosen according to the small cell model in \cite{3GPP} (with carrier frequency equal to 2 GHz), the noise power is equal to $-135$ dB, and the variance $\bar{\alpha}=1$. Furthermore, we consider the Graph 1 in Fig. \ref{Graph} as a call graph for this example. As expected, from Fig. \ref{Energy_distance}, we notice how computation offloading is more convenient if the distance between MUE and SCceNB is sufficiently low, in order to allow the offloading of the entire program with high probability. This is a further numerical justification for favoring the access to the cloud through small cells. From Fig. \ref{Energy_distance}, we also notice how, increasing the number of antennas, we get a larger energy saving thanks to the increased offloading of computing tasks.

\begin{figure}[t]
\centering
\includegraphics[width=7.5cm]{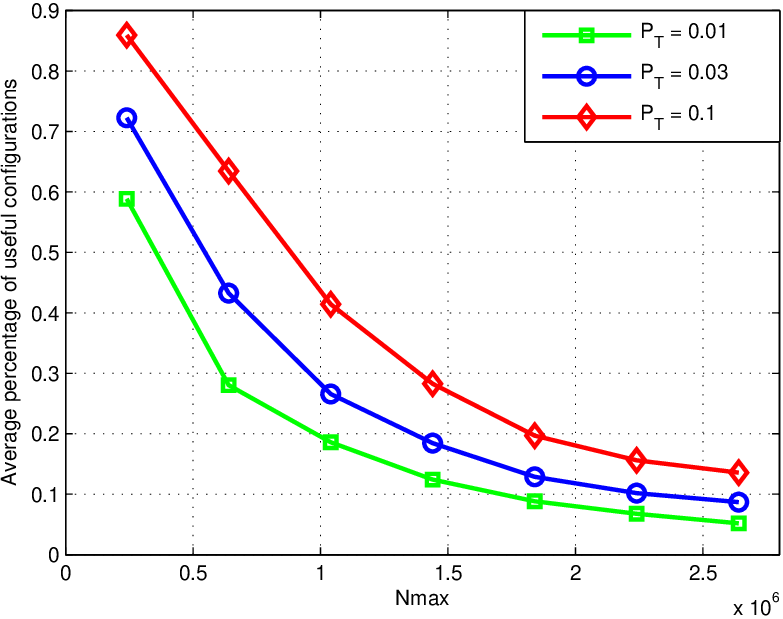}
  \caption{Average percentage of useful call graph configurations versus the maximum size $N_{\max}$ of the program state, for different power budgets $P_T$.}\label{Config_delta}
\end{figure}

\begin{figure}[t]
\centering
\includegraphics[width=7.5cm]{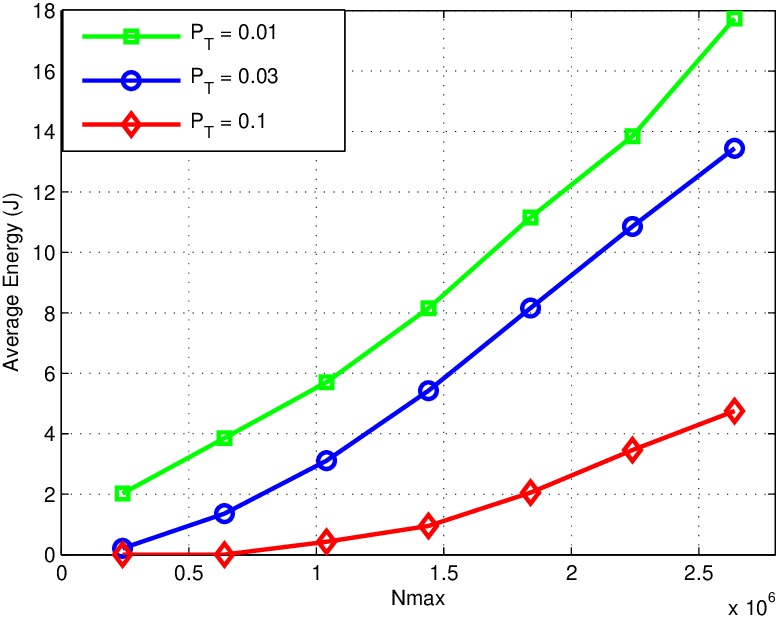}
  \caption{Average energy spent for offloading versus the maximum size $N_{\max}$ of the program state, for different power budgets $P_T$.}\label{Energy_delta}
\end{figure}

\noindent{\it Numerical Example 2 - Energy Saving/Complexity Tradeoff:} The aim of this example is to show the existing tradeoff between the performance achievable by the proposed offloading strategy and its computational complexity. Considering a call graph having $\bar{V}$ offloadable vertexes, in principle the search procedure should compare the results achieved over all the $2^{\bar{V}}$ possible configurations. However, given a certain realization of wireless channels, only a subset of these configurations satisfy the feasibility conditions in (\ref{feasibility}). It is then of interest to investigate on the effective percentage of useful configurations that the proposed algorithm must check with respect to variations of the application's parameters, the radio channel, and the radio requirements on the power budget. The results have been obtained in the following way. Starting from a graph structure as in Fig. \ref{Graph}, we assume a maximum size $N_{\max}$ of the program state to be transferred and a maximum value $w_{\max}$  of the computational load (in terms of CPU cycles). Then, we generate one thousand graphs having the structure of Fig. \ref{Graph}, but with values of $N_{u, v}$ and $w_v$ generated as uniform random variables in the interval $(0, N_{\max}]$ and $(0, w_{\max}]$. Thus, in Figs. \ref{Config_delta} and \ref{Energy_delta}, we illustrate the average behavior of the number of useful call graph configurations that the algorithm must check and the optimal energy spent by the proposed offloading strategy, versus $N_{\max}$, while setting $w_{\max}=10^7$. In Figs. \ref{Config_delta} and \ref{Energy_delta}, we also consider different values of the power budget constraint $P_T$. The channel parameters are chosen as in the previous simulation, and the MUE/SCceNB distance is set equal to 100 meters. The other parameters are the same of the previous example. The result reported in Fig. \ref{Config_delta} shows that, as $N_{\max}$ increases, there are less and less feasible configurations to check, so that the overall computational load greatly diminishes with respect to the search over all the possible call graph's combinations.  We also see how, increasing the transmit power $P_T$, there are more feasible configurations because offloading is more likely to occur. At the same time, in Fig. \ref{Energy_delta}, we report the energy consumption averaged over all call graph realizations, for different values of the power budget constraint $P_T$. From Fig. \ref{Energy_delta}, we can notice how low energy consumptions are obtained as $N_{\max}$ decreases. In this case, the energy consumption is smaller for the higher transmit power, because offloading occurs more frequently. Thus, Figs. \ref{Config_delta} and \ref{Energy_delta} show an interesting tradeoff between complexity and performance achieved by the proposed offloading strategy: lower energy consumptions are achieved at the price of a larger complexity for solving Algorithm 1.

\section{Joint Optimization in the Multiple \\Channel Case}

We now extend the previous offloading optimization strategy to the wideband channel case, where the MUE transmits over a set of $K$ parallel sub-bands. In this case, we have the extra degree of freedom on how to distribute the transmit power across the parallel channels. In the multi-channel case, the average delay associated to the correct reception of $N_{u,v}$ bits at the server side is given by:
\begin{align}\label{Avg_Delay2}
D_{u,v}(\bp_{u,v})=\frac{N_{u,v}'}{\displaystyle\sum_{k=1}^K\log\left(1+a_k p^k_{u,v}\right)}
\end{align}
\begin{equation}
\displaystyle \hbox{with} \quad N_{u,v}'=N_{u,v} T_b\log2, \quad \displaystyle a_k=\frac{\alpha_k^2\cdot PL(d)}{\Gamma(\mathrm{BER}) N_0},
\end{equation}
where $\alpha_k^2$ and $p^k_{u,v}$ denote the fading coefficient and the power transmitted over the $k$-th subchannel, respectively, and  $\bp_{u,v}=[p^1_{u,v},\ldots,p^K_{u,v}]^T$. Exploiting (\ref{Avg_Delay2}), the average energy cost $J_{u,v}(\bp_{u,v})$, associated to the transfer of the program state $N_{u,v}$, is given by
\begin{equation}\label{Energy3}
J_{u,v}(\bp_{u,v})=D_{u,v}(\bp_{u,v})\sum_{k=1}^K p^k_{u,v}.
\end{equation}
Let us define $\bp=\{\bp_{u,v}\}_{(u,v)\in \mathcal{E}_e}\in \mathbb{R}^{K\cdot{\rm card}(\mathcal{E}_e)}$. Exploiting (\ref{Avg_Delay2})-(\ref{Energy3}), the joint optimization of radio resources and code partitioning, in the case of transmission over a wideband channel, can be cast as:
\begin{align} \label{Opt_problem4}
&[{P.3}]\hspace{.2cm}\min_{\mathbf{I},\mathbf{p}\small{\geq}\mathbf{0}}  \quad \sum_{v\in \mathcal{V}} (1-I_v)\cdot E_v^l + \sum_{(u,v)\in \mathcal{E}_e} \big[J_{u,v}(\bp_{u,v})I_v  \nonumber\\
&\hspace{1.5cm}+\varepsilon_{u,v}I_u-(J_{u,v}(\bp_{u,v})+\varepsilon_{u,v})I_uI_v\big] \nonumber\\
&\hbox{subject to}\nonumber\\
& \; \sum_{v\in \mathcal{V}}\left[(1-I_v)T_v^l+I_vT_v^r\right]+ \sum_{(u,v)\in \mathcal{E}_e} \big[D_{u,v}(\bp_{u,v})I_v \nonumber\\
&\hspace{1cm}+\gamma_{u,v}I_u-(D_{u,v}(\bp_{u,v})+\gamma_{u,v})I_uI_v\big] \leq L \nonumber\\
& \quad \quad  I_v\in\{0,1\}, \quad \quad I_v=0, \quad \forall v\in \mathcal{V}_l, \nonumber \\
& \quad \;\; \; \sum_{k=1}^K p^k_{u,v} \leq P_T\cdot I_v(1-I_u), \quad \forall\; (u,v)\in \mathcal{E}_e, \nonumber
\end{align}
where $J_{u,v}(\bp_{u,v})$ and $D_{u,v}(\bp_{u,v})$ are given by (\ref{Energy3}) and (\ref{Avg_Delay2}), respectively. The last constraint in $[P.3]$ denotes the budget on the sum of transmit powers, which is equal to $P_T$ if $I_u=0$ and $I_v=1$, and zero in all other cases.

\vspace{-.3cm}
\subsection{Joint Optimization Algorithm}

Proceeding as for $[P.1]$, let us consider a generic combination $c\in \mathcal{C}$, where $\mathcal{C}$ is the set of all possible combinations of the binary variables $I_v$, $v\in \mathcal{V}$. For each combination $c$, the value of the variables $I_v$ is fixed to some value $I^c_v$, and problem $[P.3]$ becomes the radio resource allocation problem
\begin{align}
\hspace{-.2cm}[{P.4}]\hspace{.5cm}\min_{\mathbf{p}_c\geq\mathbf{0}} & \quad \sum_{(u,v)\in \mathcal{E}^c_e}  \frac{\displaystyle N_{u,v}'\sum_{k=1}^K p^k_{u,v}}{\displaystyle\sum_{k=1}^K\log\left(1+a_k p^k_{u,v}\right)}   \nonumber\\
& \hbox{s.t.} \quad  \sum_{(u,v)\in \mathcal{E}^c_e} \frac{N_{u,v}'}{\displaystyle\sum_{k=1}^K\log\left(1+a_k p^k_{u,v}\right)}   \leq L_c \nonumber\\
& \;\; \;\quad\quad \sum_{k=1}^K p^k_{u,v} \leq P_T, \quad \forall\; (u,v)\in \mathcal{E}_e^c, \nonumber
\end{align}
where $\mathcal{E}^c_e$ is the set of edges $(u,v)\in \mathcal{E}_e$ for which $I^c_u=0$ and $I^c_v=1$ given the combination $c$. Furthermore, the optimization vector is $\bp_c=\{\bp_{u,v}\}_{(u,v)\in \mathcal{E}_e^c}$, where $\bp_{u,v}=[p^1_{u,v},\ldots,p^K_{u,v}]^T$. The delay constraint $L_c$ is the same as (\ref{L1}). Again, the problem in $[P.4]$ is nonconvex. However, it can be proved the following result:

\vspace{-.3cm}
\textit{\begin{theorem}
If the following conditions are satisfied:
\begin{equation}\label{feasibility2}
L_c>0 \quad \hbox{and} \quad \sum_{k=1}^K \log(1+a_k p^*_k)\geq \frac{\sum_{(u,v)\in \mathcal{E}^c_e} N'_{u,v}}{L_c},
\end{equation}
where $\bp^*=\{p^*_k\}_{k=1}^K$ is the solution of the water-filling problem (\ref{WF}) (see Appendix B), then:
\begin{description}
  \item[(a)] problem $[P.4]$ admits a non-empty feasible set;
  \item[(b)] problem $[P.4]$ is equivalent to the problem $[P.8]$ (see Appendix B), which is such that any local optimum is also globally optimal.
  \end{description}
\end{theorem}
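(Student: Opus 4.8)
The plan is to reproduce, in the vector‑power setting, the two‑part argument behind Theorem~1 in Appendix~A, using the water‑filling machinery of Appendix~B in place of the scalar power optimization. For part~(a), observe that in $[P.4]$ the edges are coupled only through the delay budget $L'_c$, and that each per‑edge delay $N'_{u,v}/\sum_{k}\log(1+a_kp^k_{u,v})$ is a strictly decreasing function of the sum‑rate $\sum_k\log(1+a_kp^k_{u,v})$. Hence the least delay that edge $(u,v)$ can contribute while respecting its budget $\sum_kp^k_{u,v}\le P_T$ is achieved by the power vector maximizing $\sum_k\log(1+a_kp^k_{u,v})$ under that budget, i.e. the classical water‑filling allocation $\bp^*=\{p^*_k\}$ of (\ref{WF_solution}), with sum‑rate $\sum_k\log(1+a_kp^*_k)$. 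Using water‑filling on every edge of $I_c$ makes the left side of the delay constraint equal to $\big(\sum_{(u,v)\in I_c}N'_{u,v}\big)\big/\sum_k\log(1+a_kp^*_k)$, which by the second inequality in (\ref{feasibility2}) is $\le L'_c$; together with $L'_c>0$ this shows the allocation is feasible, proving~(a). (The same monotonicity shows that if (\ref{feasibility2}) is violated no budget‑feasible allocation can meet the delay constraint, so the condition is also necessary.)

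For part~(b) I would change variables to the per‑subchannel rates $r^k_{u,v}=\log(1+a_kp^k_{u,v})$, equivalently $p^k_{u,v}=(e^{r^k_{u,v}}-1)/a_k$, a smooth bijection of the open positive orthant onto itself, mapping $\bp_c$ to a rate vector $\br_c$. Under this substitution $[P.4]$ becomes the problem $[P.6]$ of minimizing $\sum_{(u,v)\in I_c}N'_{u,v}\,\big(\sum_k(e^{r^k_{u,v}}-1)/a_k\big)\big/\big(\sum_k r^k_{u,v}\big)$ subject to $\sum_{(u,v)\in I_c}N'_{u,v}\big/\big(\sum_k r^k_{u,v}\big)\le L'_c$ and $\sum_k(e^{r^k_{u,v}}-1)/a_k\le P_T$ for every $(u,v)\in I_c$. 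The feasible set of $[P.6]$ is convex, since both $1/\big(\sum_k r^k_{u,v}\big)$ and $\sum_k(e^{r^k_{u,v}}-1)/a_k$ are convex on the positive orthant; and each objective summand, being the ratio of a nonnegative convex function to a positive affine function, is pseudoconvex by the classical generalized‑convexity results (Mangasarian, Schaible). Writing $F$ for the objective of $[P.6]$, this already gives that every lower level set of each summand is convex and that a stationary point of a single summand over the corresponding per‑edge feasible set is a global minimizer of that summand.

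The main obstacle is that a sum of pseudoconvex functions need not be pseudoconvex, so one cannot simply invoke ``pseudoconvex objective over a convex set $\Rightarrow$ every local optimum is global.'' I would circumvent this by a Lagrangian decomposition: attaching a multiplier $\lambda\ge0$ to the delay constraint, the $\lambda$‑relaxed objective separates across edges into the terms $N'_{u,v}\big(\sum_k(e^{r^k_{u,v}}-1)/a_k+\lambda\big)\big/\big(\sum_k r^k_{u,v}\big)$, which are again ratios of a convex function over a positive affine function — hence still pseudoconvex — so over each convex per‑edge feasible set a stationary point is a global minimizer of that term. Consequently, at any KKT point $(\br_c^\star,\lambda^\star)$ of $[P.6]$ each block $\br_{u,v}^\star$ globally minimizes its $\lambda^\star$‑relaxed term over its feasible set; summing over the edges of $I_c$ and using $\lambda^\star\ge0$, primal/dual feasibility and complementary slackness yields $F(\br_c^\star)\le F(\br_c)$ for every feasible $\br_c$, so the KKT point is globally optimal. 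Since the feasible set of $[P.6]$ is convex with nonempty relative interior — guaranteed by (\ref{feasibility2}) in its strict form via a Slater / Mangasarian–Fromovitz qualification for the smooth constraints — every local optimum is a KKT point and therefore globally optimal, which is exactly the assertion of part~(b). I expect the delicate steps to be (i) establishing the pseudoconvexity of the fractional objective terms and verifying that it is preserved after adding $\lambda$ times the convex delay term, and (ii) the constraint qualification; step~(i) is precisely the feature with no counterpart in the single‑channel proof, where the reduced problem $[P.5]$ is already strictly convex.
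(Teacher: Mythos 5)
Your proposal is correct, and parts (a) and the first half of (b) (water-filling to certify feasibility, the rate change of variables, convexity of the transformed feasible set, pseudoconvexity of each fractional summand as a convex-over-positive-affine ratio) coincide with the paper's Appendix~B argument; the paper even proves the ratio pseudoconvexity from scratch via the same first-order-condition manipulation you cite from the generalized-convexity literature. Where you genuinely diverge is the final ``local implies global'' step. The paper exploits the fact that the summands $w_i(\bt_i)$ are uncoupled, so the stationary-point set of the sum $f=\sum_i w_i$ is the Cartesian product of the per-block stationary sets; it then shows $f$ is pseudoconvex \emph{at each of its stationary points} and invokes Mangasarian to conclude that every stationary point is a global minimum. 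You instead dualize the single coupling (delay) constraint, observe that the $\lambda$-relaxed objective separates into per-edge terms that remain pseudoconvex (convex numerator plus $\lambda N'_{u,v}\ge 0$ over a positive affine denominator), conclude each block of a KKT point globally minimizes its relaxed term over its per-edge power set, and close with complementary slackness and weak duality. Your route costs you a constraint qualification (which you correctly flag; the degenerate case where (\ref{feasibility2}) holds with equality is harmless since the feasible set then collapses to the all-water-filling point), but it buys a cleaner treatment of \emph{constrained} local optima: the paper's argument literally addresses points with $\nabla f(\bt_c^0)=0$, whereas a local optimum of $[P.6]$ sitting on an active power or delay constraint need not be an unconstrained stationary point, so your KKT-based formulation is arguably the more faithful justification of the claim ``any local optimum is globally optimal.'' Your added remark that the feasibility condition is also necessary is correct (per-edge delays are separately minimized by water-filling and simply add), though the paper only claims sufficiency.
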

\begin{proof}
See Appendix B.
\end{proof}}

To solve the problem $[P.3]$, we propose an algorithm totally similar to Algorithm 1. The only difference is that, in (S.1), to find the set of useful combinations $\mathcal{C}$, we check the feasibility condition (\ref{feasibility2}), instead of (\ref{feasibility}), and in (S.2) we solve the optimization problem in $[P.4]$ $\left(\hbox{i.e., }[P.8]\right)$, instead of $[P.2]$ $\left(\hbox{i.e., }[P.7]\right)$.

\textbf{Remark 4:} To check the feasibility condition in (\ref{feasibility2}), the solution of the water-filling problem (\ref{WF}) (see Appendix B) must be first computed. This solution is given by a simple iterative algorithm that converges in (at most) $K$ iterations. Thus, since the complexity of the Water-Filling computation is very low, the selection of the useful call graph configurations is very efficient also in the wide-band channel case.\qedsymbol

\subsection{Numerical Results}

We now apply our proposed joint optimization approach to the case of a realistic call graph of a program, representing a face recognition application, see \cite{Maui}. The application's call graph is the same shown in Fig. \ref{Call_graph}, where nodes 0 and 4 represent the user interface and cannot be offloaded. The other parameters are
$E_1^l=0.872$, $E_2^l=4.703$, $E_3^l=13.03$ (Joule), $w_1=18.1$, $w_2=92.6$, $w_3=256.1$ (Mcycles), and $N_{0,1}=182$, $N_{1,2}=4675$, $N_{1,3}=13860$ (KByte). For this application, we illustrate an example of optimal power allocation and call graph's partition, in the case the MUE is transmitting by using multiple channels. For this purpose, in Fig. \ref{FR_MC_Optimization}, we report the result of our joint optimization algorithm. We consider an OFDM system with $K=8$ subcarriers.  For the problem in $[P.3]$, the latency constraint $L$ is chosen equal to the time needed by the MUE to compute the entire program locally, and the power budget constraint is $P_T=0.018$ Watt. The local CPU clock speeds $f_l$ and $f_s$, and the bit duration $T_b$ are chosen as in previous simulations. The normalized wireless channels $a_k$ between MUE and SCceNB are given in Fig. \ref{FR_MC_Optimization} (middle). The optimal graph's partition for this parameter setting is shown in Fig. \ref{FR_MC_Optimization} (bottom), where, again, the white nodes denote procedures computed at the SCceNB side. As we can notice from Fig. \ref{FR_MC_Optimization} (bottom), all the remoteable nodes are offloaded to the SCceNB. This means that the optimization has found that the most convenient solution in terms of energy is to transmit the 182 KB of the program state between the nodes 0 and 1, and then compute all the rest of the program at the server. Thus, only one link in the call graph is selected to be used for the transmission of data. In particular, the top plot in Fig. \ref{FR_MC_Optimization} (top) shows the optimal power allocation over the multiple channels, achieved as a result of the optimization problem $[P.4]$. As we can see from Fig. \ref{FR_MC_Optimization} (top and middle), the power allocation shows a water-filling behavior, where all the power is concentrated over the best channels, while no bits are transmitted over the worse channels. The energy saving is potentially huge in this case. Indeed, by computing locally nodes 1, 2, and 3 of the call graph, the MUE would have spent 18.6 Joule, while, by offloading the entire program as in Fig. \ref{FR_MC_Optimization} (bottom), the MUE would spend only 25 mJoule. The gain in terms of energy saving is about 740 times.

\begin{figure}[t]
\centering
\includegraphics[width=8cm]{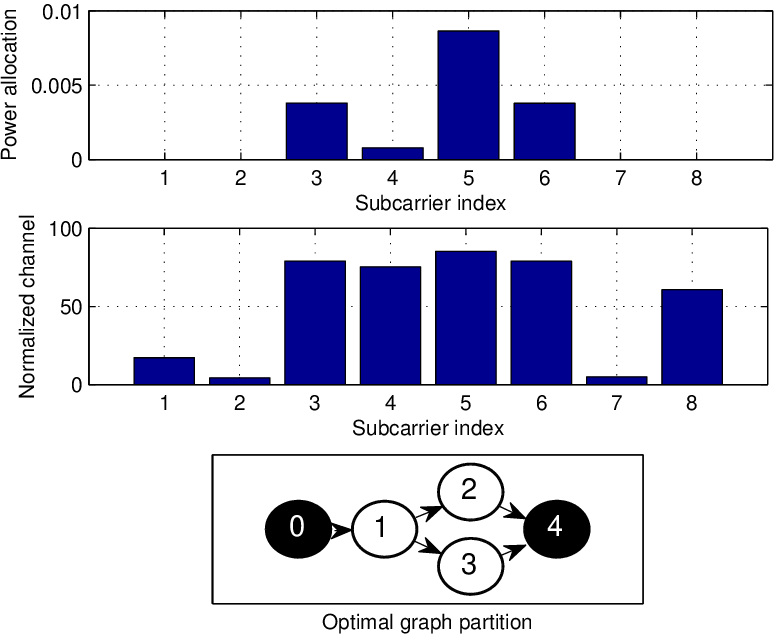}
  \caption{Result of the Joint optimization. (Top) Optimal power allocation. (Middle) Normalized channel coefficients. (Bottom) Optimal graph's partition.}\label{FR_MC_Optimization}
\end{figure}

\section{Relaxed Optimization based on Successive Convex Approximation}

As discussed above, finding the optimal solution of the mixed integer programming $[P.1]$ requires an exhaustive search over all the useful call graph partitions, i.e. all the configurations that satisfy the feasibility conditions in (\ref{feasibility}). Nevertheless, even if the check of conditions (\ref{feasibility}) might largely reduce the complexity of the overall search, it is clear that such a solution can be adopted only in the case of call graphs with a moderate number of nodes. To face with large size call graphs, we propose now a low-complexity algorithm that is still able to obtain high-quality solutions of problem $[P.1]$ (the same approach can be used to solve also problem $[P.3]$). By hinging on recent successive convex approximation techniques \cite{Scutari_ICASSP14},\cite{Scutari_nonconvex}, we devise an iterative algorithm where the nonconvex optimization problem $[P.1]$ is replaced by a sequence of strongly convex problems. To this end, we relax the original problem by converting the integer variables $\{I_v\}$ into real variables belonging to the interval $[0,1]$. To simplify our formulation we introduce in $[{P.5}]$ the change of variables  $t_{u,v}=I_v(1-I_u)$ where
$t_{u,v}\in [0,1]$, $\forall u,v \in \mathcal{E}_e$. Hence, defining $\mathbf t\triangleq(t_{u,v})_{\forall u,v \in \mathcal{E}_e}$, $[{P.1}]$ can be relaxed into the following nonconvex problem:
\begin{align}
&[{P.5}]\hspace{.2cm}\min_{\bI,\bt,\bp}  \!\quad E(\bI,\bt,\bp)\triangleq\sum_{v\in \mathcal{V}} (1-I_v)\cdot E_v^l  \nonumber\\
&\qquad+\sum_{(u,v)\in \mathcal{E}_e} \!\! \!\big[J_{u,v}(p_{u,v})\cdot t_{u,v} \nonumber +\varepsilon_{u,v}(t_{u,v}-I_v+I_u)\big]\label{Integer_Linear_Program}
\end{align}
\begin{align}
&\hbox{subject to}\nonumber\\
&\hspace{.15cm}\left.\begin{array}{llll}  \texttt{a)}\;\;  \bar{D}(\bI,\bt,\bp)  \leq L, \medskip\\
\texttt{b)}\;\;  I_v(1-I_u) - t_{u,v}\leq 0, \quad \;\, \forall \;u,v\in \mathcal{E}_e,\medskip\\
\texttt{c)}\;\;   t_{u,v}\in [0,1],  I_v\in [0,1], \quad  \forall \;u,v\in \mathcal{E}_e,\medskip\\
 \texttt{d)}\;\;  I_v=0, \hspace{2.7cm}  \forall \; v\in \mathcal{V}_l,\medskip\\
 \texttt{e)}\;\;  0\leq p_{u,v} \leq P_T \cdot t_{u,v},  \quad\quad\forall\; u,v\in \mathcal{E}_e,\medskip\\
  \end{array}\hspace{-0.1cm} \right\}\!\! \triangleq \mathcal{X} \nonumber
\end{align}
\begin{align}
\hbox{where}\quad&\bar{D}(\bI,\bt,\bp)\triangleq\displaystyle \sum_{v\in \mathcal{V}}\left[(1-I_v)T_v^l+I_vT_v^r\right]\nonumber\\
&\hspace{-.3cm}+\displaystyle\sum_{(u,v)\in \mathcal{E}_e} \big[D_{u,v}(p_{u,v})t_{u,v} +\gamma_{u,v}(t_{u,v}-I_v+I_u)\big]. \nonumber
\end{align}
Let $\by\triangleq (\bI,\bt,\bp)$ and $\by^{\nu}\triangleq (\bI^{\nu},\bt^{\nu},\bp^{\nu})$,  with $\nu$ denoting
the temporal index of the iterative procedure that aims to solve $[P.5]$. To find a suitable convexification of $[P.5]$, we need to define valid surrogates for the nonconvex objective function $E(\bI,\bt,\bp)$, and the constraints in a) and b). Proceeding as in \cite{Scutari_ICASSP14}, \cite{Scutari_nonconvex}, we first build a (strongly convex) surrogate function $\tilde{E}(\by;\by^{\nu})$ that approximates the nonconvex objective function $E(\by)$ around the current iterate $\by^{\nu}\in \mathcal{X}$ and enjoys some desirable properties (see \cite{Scutari_ICASSP14}, \cite{Scutari_nonconvex} for further details).
To build $\tilde{E}$, let us preserve the convex terms in $E(\bI,\bt,\bp)$ while convexifying the non-convex part
$\sum_{(u,v)\in \mathcal{E}_e} J_{u,v}(p_{u,v})\cdot t_{u,v}$.
More formally, let us introduce the strongly convex function:
\begin{align}
\tilde{E}(\by;\by^{\nu})& \triangleq \sum_{v\in \mathcal{V}} (1-I_v)\cdot E_v^l + \hspace{-.2cm}\sum_{(u,v)\in \mathcal{E}_e} \hspace{-.2cm}\varepsilon_{u,v}(t_{u,v}-I_v+I_u)\nonumber\\
&\hspace{-.5cm} +\sum_{(u,v)\in \mathcal{E}_e} \left(p^{\nu}_{u,v} D_{u,v}(p_{u,v})+p_{u,v} D_{u,v}(p^{\nu}_{u,v})\right)\cdot {t^{\nu}_{u,v}}\nonumber\\
&\hspace{-.5cm} + \sum_{(u,v)\in \mathcal{E}_e} p^{\nu}_{u,v} D_{u,v}(p_{u,v}^{\nu})\cdot {t_{u,v}}
 + \tau_p \parallel \bp-\bp^{\nu} \parallel^2 \nonumber\\
&\hspace{-.5cm} +\tau_I \parallel \bI-\bI^{\nu} \parallel^2+\tau_t \parallel \bt-\bt^{\nu} \parallel^2 \nonumber
\end{align}
where $\tau_p$, $\tau_I$, $\tau_t$ are positive constants. Note that $\tilde{E}(\by;\by^{\nu})$ has some desirable properties (see \cite{Scutari_ICASSP14},\cite{Scutari_nonconvex}) such as: i) it is uniformly strongly convex with Lipschitz continuous gradient on $\mathcal{X}$; ii) it preserves the same first-order behavior of the original nonconvex function $E(\by)$ at $\by^{\nu}$, i.e., $\nabla_{\by} \tilde{E}(\by^{\nu};\by^{\nu})=\nabla_{\by}{E}(\by^{\nu})$,  $\forall\by^{\nu}\in \mathcal X$ (see \cite{Scutari_ICASSP14},\cite{Scutari_nonconvex}). Now, similarly to what we did for the objective function, we introduce an inner convexification of the (nonconvex) constraints a) and b) in $[{P.5}]$. Let us first build a convex approximation,
say $\tilde{D}(\by;\by^{\nu})$, of the delay constraint $\bar{D}(\by)$ around the current iterate
$\nu$. To do so, we convexify the term  $D_{u,v}(p_{u,v}) t_{u,v}$, for each $u,v \in \mathcal{E}_e$, so that an inner convexification of the constraint a) in $[{P.5}]$ is given by:
\beq
\begin{split}
\tilde{D}(\by;& \by^{\nu}) = \displaystyle \sum_{v\in \mathcal{V}}\left[(1-I_v)T_v^l+I_v T_v^r\right]+ \hspace{-.3cm}\displaystyle\sum_{(u,v)\in \mathcal{E}_e} \big[D_{u,v}(p^{\nu}_{u,v})  \\ &  \hspace{-.5cm}\cdot  \frac{t_{u,v}}{2}+D_{u,v}(p_{u,v})\frac{t^{\nu}_{u,v}}{2} +\gamma_{u,v}(t_{u,v}-I_v+I_u)\big]\leq L.\nonumber
\end{split}
\eeq
Finally we introduce the following inner convexification $\tilde{g}_{u,v}(\by;\by^{\nu})$ of the constraint b) in  $[{P.5}]$:
\beq
\tilde{g}_{u,v}(\by;\by^{\nu})\triangleq \ds \frac{I_v}{2}(1-I^{\nu}_u)+\ds \frac{I^{\nu}_v}{2}(1-I_u) - t_{u,v}. \nonumber
\eeq
The approximations $\tilde{D}(\by; \by^{\nu})$
and $\tilde{g}_{u,v}(\by;\by^{\nu})$ are chosen in order to satisfy some key properties, see \cite{Scutari_ICASSP14},\cite{Scutari_nonconvex} for details.

\subsection{Inner SCA algorithm}

Exploiting the previous arguments, we are now able to introduce the proposed (strongly) convex approximation of problem $[{P.5}]$
around a feasible point $\by^{\nu}$. In particular, replacing the nonconvex objective $E(\by)$ and the nonconvex constraints in a) and b) with the approximations $\tilde{E}(\by;\by^{\nu})$, $\tilde{D}(\by;\by^{\nu})$ and $\tilde{g}(\by;\by^{\nu})$, respectively, we have
\begin{align}
[{P.6}] &\quad  \underset{\mathbf{\by}}{ \text{min}} \quad \tilde{E}(\by;\by^{\nu}) \nonumber\label{Integer_Linear_Program}\\
&\quad  \hbox{subject to}\nonumber\\
&\quad   \hspace{0.15cm}\begin{array}{llll}  \texttt{a)}\;\; \tilde{D}(\by;\by^{\nu})  \leq L \medskip\\
\texttt{b)}\;\;  \tilde{g}_{u,v}(\by;\by^{\nu}) \leq 0,\quad \quad \quad \quad \forall u,v\in \mathcal{E}_e,\medskip\\
\texttt{c)}\;\;   t_{u,v}\in [0,1],  I_v\in [0,1], \quad  \forall \;u,v\in \mathcal{E}_e,\medskip\\
 \texttt{d)}\;\;  I_v=0, \hspace{2.7cm}  \forall \; v\in \mathcal{V}_l,\medskip\\
 \texttt{e)}\;\;  \delta^\nu \leq p_{u,v} \leq P_T \cdot t_{u,v}+\delta^\nu,  \quad\forall\; u,v\in \mathcal{E}_e.
\end{array}\nonumber
\end{align}
In constraint e), we introduced the sequence $\delta^\nu$ that is instrumental to preserve the feasibility of $\mathcal{X}$ while driving the transmission powers to vanishing values. To guarantee that the optimal solutions of $[P.6]$ coincide with those of $[P.5]$, it must hold $\lim_{\nu\rightarrow\infty}\delta^\nu=0$. Then, starting from a feasible point $\by^{0}\triangleq({\bI}^{0}, {\bt}^{0},{\bp}^{0})$ (satisfying the feasibility conditions in (\ref{feasibility})), the proposed method consists in solving the sequence of problems $[{P.6}]$ by following the steps described in Algorithm 2. Note that, in step (S.3), the algorithm exploits a diminishing step-size sequence $\beta^\nu$, such that $\sum_{\nu=0}^\infty \beta^\nu=\infty$, and $\sum_{\nu=0}^\infty (\beta_\nu)^2<\infty$. Under these assumptions, it is possible to prove that the proposed algorithm converges to a stationary point of problem $[P.5]$ (see \cite{Scutari_ICASSP14},\cite{Scutari_nonconvex} for details on the convergence proof).

\begin{algorithm}[t]

\begin{flushleft}

\normalsize

\smallskip
\textbf{Data:} $\by^0\triangleq(\mathbf{I}^{0},\bt^0,\bp^0)\in \mathcal{X}$; $\{\beta^{\nu}\}_\nu \in (0,1]$; $\{\delta^\nu\}_\nu$; $\tau_p>0$; $\tau_I>0$, $\tau_t>0$. Set $\nu=0$.

\vspace{.15cm}
(\texttt{S.1}): If $\by^{\nu}$ satisfies a termination criterion, \texttt{STOP};

\vspace{.15cm}
(\texttt{S.2}): Compute the solution $\hat{\by}(\by^{\nu})$ of problem $[{P.6}]$;

\vspace{.15cm}
(\texttt{S.3}):  Set $\by^{\nu+1}=\by^{\nu}+\beta^{\nu}\left(\hat{\by}(\by^{\nu})-\by^{\nu}\right)$;

\vspace{.15cm}
(\texttt{S.4}):   $\nu \leftarrow \nu+1$  and go  to (\texttt{S.1}).

\vspace{.15cm}
\caption{\textbf{:}  Inner SCA  Algorithm for $[{P.5}]$  \label{alg:Alg_centr}}

\end{flushleft}

\end{algorithm}

In the sequel, we provide some numerical results aimed at illustrating the performance of the proposed relaxed method when compared to the global optimal solution provided by Algorithm 1.

\subsection{Numerical Results}

The method proposed in Algorithm 2 performs a relaxation of the integer variables, which are now real variables that can assume any value between zero and one. It is then fundamental to check if the proposed method is still able to give meaningful results that well approximate the solutions of the original problem $[P.1]$. In particular, we have noticed that, despite the relaxation, in many cases the results achieved through algorithm 2 provide values for the integer variables that are exactly equal to zero or one. In some other cases, the final results are not exactly integer, but still very close to zero and one. In these latter cases, we associate the final result obtained through Algorithm 2 with the closest combination of the call graph, and find the optimal transmit power by solving the correspondent convex radio allocation problem in $[P.2]$ (i.e. $[P.7]$). Furthermore, it is also important to assess the convergence speed of Algorithm 2, because a true complexity reduction with respect to $[P.1]$ is obtained only if the method is able to converge in a few iterations. As an example, in Fig. \ref{convergence} we report the behavior of Algorithm 2 in terms of energy spent for transmission/processing versus the iteration index, considering three different initializations of the algorithm. The radio parameters and the call graph are the same used to achieve the results in Fig. 3, and the equivalent channel $a$ is set equal to 500. The step-size is chosen in order to satisfy the rule $\beta^{\nu}=\beta^{\nu-1}(1-\mu\beta^{\nu-1})$, with $\beta^{0}=0.2$, and $\mu=10^{-4}$. The sequence $\delta^\nu$ is chosen such that  $\delta^\nu=\delta^0/(\nu+1)$, with $\delta^0=10^{-4}$. As we can notice from Fig. \ref{convergence}, for all the initializations, the algorithm converges to the same solution in a few iterations. In particular, the solution found by the proposed algorithm coincides with the optimal solution obtained by Algorithm 1.

\begin{figure}[t]
\centering
\includegraphics[width=7.5cm]{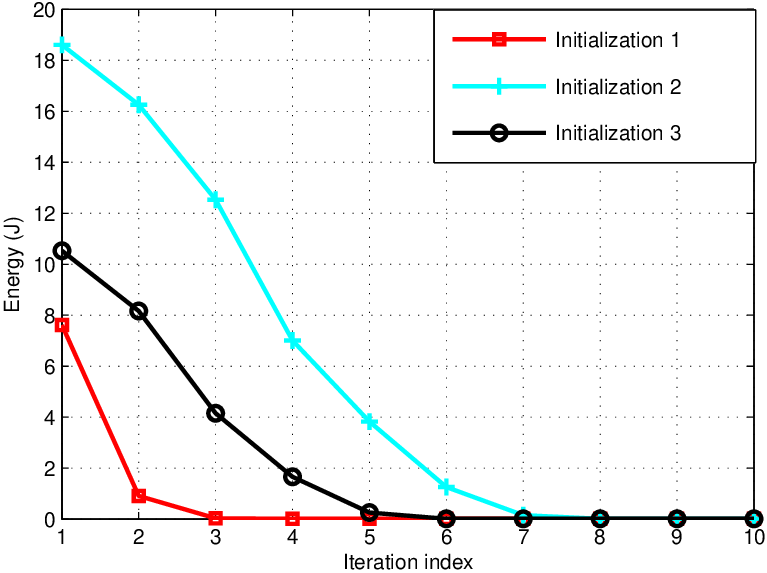}
  \caption{Behavior of Energy versus iteration index, for different initializations of Algorithm 2.}\label{convergence}
\end{figure}

To assess the performance of Algorithm 2 (SCA) with respect to the global optimum obtained by Algorithm 1, in Fig. \ref{SCA}, we report the average energy spent for processing versus the distance between MUE and SCceNB for both algorithms. We consider different communication strategies, i.e. SISO, and 1x2 SIMO. The results are averaged over 200 independent realizations. The parameters are the same used in the previous simulation. From Fig. \ref{SCA}, as expected, the energy spent for processing increases at larger distances, and the possibility of having a MIMO channel improves the overall performance. Interestingly, we can see from Fig. \ref{SCA} that the performance of the relaxed method based on SCA is very close to the optimal solution achieved by Algorithm 1. This result illustrates how Algorithm 2 might be a good, low complexity alternative to Algorithm 1, especially when the size of the call graph makes impossible an exhaustive search over all the combinations.

\begin{figure}[t]
\centering
\includegraphics[width=7.5cm]{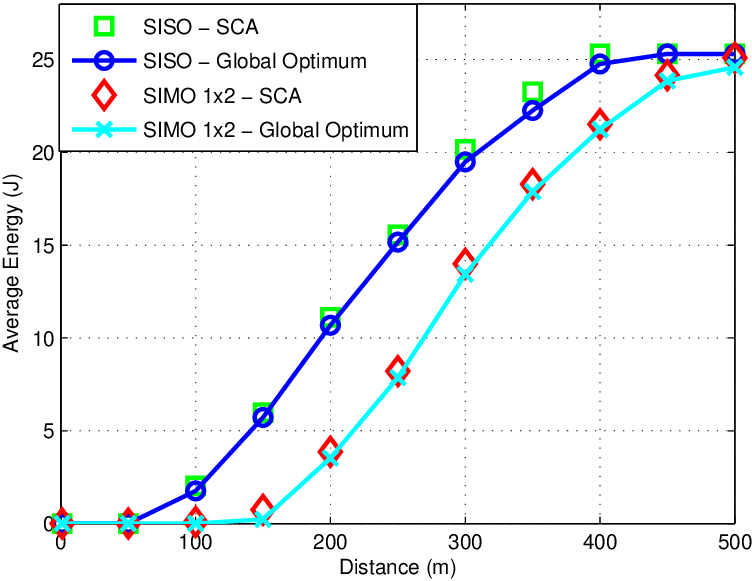}
  \caption{Average energy spent for processing versus distance (from MUE to SCceNB), for different algorithms and communication strategies.}\label{SCA}
\end{figure}

\section{Conclusions}

In this paper, we have proposed a method to optimize the allocation of communication resources and the call graph's partition of a computer program jointly, in a mobile edge computing context, with the aim of minimizing the energy consumption at the mobile user, while satisfying a delay constraint imposed by the application. We have proved that, for any given call graph's partition, the optimization problem associated to the selection of the  radio parameters only is convex and then it can be solved with numerically efficient methods. However, the overall problem turns out to have a combinatorial complexity that increases with the granularity of the call graph structure.  To cope with the combinatorial nature of the problem, we have first illustrated how several graph configurations can be discarded depending on the current channel value, thus reducing the complexity burden of the overall approach. Then, we have introduced a lower complexity method that solves a relaxed version of the original optimization problem, which is shown to perform very similarly to the global optimal solution. Simulation results have been presented to illustrate for what kind of application and  channel conditions, computation offloading can provide a significant performance gain with respect to local computation. In this work, we assumed the parameters of the call graph to be known. An interesting development of this work is the incorporation of learning mechanisms to tackle the case where these parameters are to be estimated or predicted. A further development concerns the multiuser case, where a set of mobile users concur for the use of the same radio/computation resource pool, while incorporating graph partitioning strategies.

\appendices

\section{Proof of Theorem 1}

In the following, for simplicity of notation and without any loss of generality, we assume that the links $(u,v)\in \mathcal{E}^c_e$ are identified by the index $i=1,\ldots,{\rm card}(\mathcal{E}_e^c)$.

To prove point (a), which ensures that the feasibility set is nonempty, we note that by inverting the relation
\begin{equation}
\sum_{i=1}^{{\rm card}(\mathcal{E}_e^c)} \frac{N'_i}{\displaystyle\log\left(1+a P_T\right)}\leq L_c,
\end{equation}
which is the first constraint in $[P.2]$ where instead of the $p_i$'s we have considered the maximum power value $P_T$, we get the inequality (\ref{feasibility}), which is a sufficient condition for the existence of a nonempty feasible set.

We proceed now in proving point (b). As said before, problem $[P.2]$ is nonconvex. However, let us consider the change of variables
\begin{equation}\label{var_change}
t_i=\log(1+ap_i) \; \rightarrow \; p_i=\frac{e^{t_i}-1}{a},
\end{equation}
$\forall\; i=1,\ldots,{\rm card}(\mathcal{E}_e^c)$. Relation (\ref{var_change}) is a one to one mapping such that it is always possible to find uniquely the variable $p_i$ from $t_i$ and viceversa. Thus, using (\ref{var_change}), the optimization problem in $[P.2]$ becomes equivalent to the following problem (see \cite[p. 130]{Boyd}):
\begin{align} \label{Opt_problem3}
[{P.7}]\hspace{.5cm}\min_{\displaystyle \bt_c} & \quad \sum_{i=1}^{\mathcal{E}^c_e}\frac{N'_{i}}{a}\frac{e^{t_i}-1}{t_i} \nonumber\\
& \hbox{s.t.} \quad \sum_{i=1}^{{\rm card}(\mathcal{E}_e^c)} \frac{N'_i}{t_i}\leq L_c \nonumber\\
& \quad \quad \; 0< t_i \leq T_{\max}, \quad i=1,\ldots,{\rm card}(\mathcal{E}_e^c),\nonumber
\end{align}
where $\bt_c=[t_1,\ldots,t_{{\rm card}(\mathcal{E}^c_e)}]^T$, and $T_{\max}=\log(1+aP_T)$. It is now straightforward to check the convexity of the optimization set, which is given by the intersection of convex sets. We now have to prove the strict convexity of the objective function. Let us define
\begin{equation}\label{obj_function}
f(\bt_c)=\sum_{i=1}^{{\rm card}(\mathcal{E}^c_e)}N''_{i}\frac{e^{t_i}-1}{t_i}
\end{equation}
where $N''_{i}=N'_{i}/a>0$. Since the function in (\ref{obj_function}) is separable in the optimization variables, its Hessian is a diagonal matrix.
Thus, to show the strict convexity of function (\ref{obj_function}), it is sufficient to prove that each diagonal element of the Hessian matrix is strictly positive inside the optimization set. Let us consider the $i$-th component. The second order partial derivative is given by
\begin{align}\label{second_derivative}
\frac{\partial^2 f(\bt_c)}{\partial^2 t_i} &= N''_{i}\frac{t_i^2e^{t_i}-2e^{t_i}(t_i-1)-2}{t^3_i}=N''_{i}\frac{g(t_i)-2}{t^3_i}
\end{align}
where $g(t_i)=t_i^2e^{t_i}-2e^{t_i}(t_i-1)=e^{t_i}(t_i-1)^2+e^{t_i}$. Since $t_i$ and $N''_{i}$ are positive, we have to analyze only the positiveness of the function $g(t_i)-2$. It is easy to verify that
$\lim_{t_i\rightarrow 0^+} g(t_i)=2$,
where the notation $\displaystyle \lim_{t_i\rightarrow t_0^+} h(t)$ means the right limit of function $h(t)$ around the point $t_0$. Then, if we prove that $g(t_i)$ is monotonic increasing for $t_i>0$, we have showed that the second derivative in (\ref{second_derivative}) is strictly positive. In particular, we have
\begin{equation}
\frac{dg(t_i)}{dt_i}=e^{t_i}\left[(t_i-1)^2+1\right]+2(t_i-1)e^{t_i}=t_i^2e^{t_i}>0,
\end{equation}
$\forall t_i>0$. Since the previous arguments hold true for all $i\in[1,\ldots,{\rm card}(\mathcal{E}_e^c)]$, we can conclude that the Hessian is positive definite and the objective function (\ref{obj_function}) is strictly convex, thus concluding the proof of point (b).

\section{Proof of Theorem 2}

To prove point (a), let us consider the two constraints of problem $[P.4]$.  For simplicity of notation let us  assume that the links $(u,v)\in \mathcal{E}_e^c$ are identified by the index $i=1,\ldots,{\rm card}(\mathcal{E}_e^c)$. To get a sufficient condition, which guarantees that the delay constraint in $[P.4]$ is satisfied, we must find a lower bound of the first constraint with respect to all the possible power allocations $\{p_i^k\}$, subject to the presence of the power budget constraints that compose the second constraint. This can be achieved by maximizing the denominator of the first constraint, for all $i$, subject to the presence of the second constraint in $[P.4]$. Thus, since the power budget constraints in $[P.4]$ are the same for all $i$, the feasibility condition is given by (\ref{feasibility2}), where $\bp^*=\{p^*_k\}_{k=1}^K$ is the solution of the problem:
\begin{align}\label{WF}
\max_{\displaystyle \bp}  \; \displaystyle\sum_{k=1}^K\log\left(1+a_k p^k\right) \quad \hbox{s.t.} \;\; \sum_{k=1}^K p^k \leq P_T.
\end{align}
The solution $\bp^*=\{p^*_k\}_{k=1}^K$ of (\ref{WF}) is the well known Water Filling \cite{Proakis}.
Using $\bp^*$, it is clear that the inequality (\ref{feasibility2}) holds, thus giving a sufficient condition on the channel values ensuring the existence of a non-empty feasible set.

To prove point (b), we exploit arguments that are similar to what we have used in Appendix A. In particular, let us consider the following change of variables:
\begin{align}\label{var_change2}
t^k_i=\log(1+a_kp^k_i) \; \rightarrow \; p^k_i=\left(e^{t^k_i}-1\right)/a_k,
\end{align}
$\forall\; i=1,\ldots,{\rm card}(\mathcal{E}_e^c), \;\;\forall\; k=1,\ldots,K$. Relation (\ref{var_change2}) is a one to one mapping such that it is always possible to find uniquely the variable $p^k_i$ from $t^k_i$ and viceversa. Thus, using (\ref{var_change2}), the optimization problem in $[P.4]$ becomes equivalent to the following problem:

\begin{align} \label{Opt_problem6}
\hspace{-.3cm}[{P.8}]\hspace{.3cm} \min_{\displaystyle \bt_c} & \quad \sum_{i=1}^{{\rm card}(\mathcal{E}_e^c)}\frac{\displaystyle N'_{i}\sum_{k=1}^K \frac{e^{t^k_i}-1}{a_k}}{\displaystyle\sum_{k=1}^K t^k_i} \nonumber\\
&\hspace{-.5cm} \hbox{s.t.} \quad \sum_{i=1}^{{\rm card}(\mathcal{E}_e^c)} \frac{N'_i}{\displaystyle\sum_{k=1}^K t^k_i}\leq L'_c \nonumber\\
&\hspace{-.5cm} \quad \quad \; \sum_{k=1}^K \frac{e^{t^k_i}-1}{a_k}\leq P_T, \quad i=1,\ldots,{\rm card}(\mathcal{E}_e^c) \nonumber
\end{align}
where $\bt_c=[\bt_1,\ldots,\bt_{{\rm card}(\mathcal{E}_e^c)}]^T$ and $\bt_i=[t^1_i,\ldots, t^K_i]$. We first prove the convexity of the set. It is straightforward to see that all the constraints grouped in the second constraint in $[P.8]$ are convex. Furthermore, let us define
\begin{equation}
g(\bt_c)=\sum_{i=1}^{{\rm card}(\mathcal{E}_e^c)}\frac{N'_i}{\displaystyle\sum_{k=1}^K t^k_i}= \sum_{i=1}^{{\rm card}(\mathcal{E}_e^c)}g_i(\bt_i).
\end{equation}
Since each function $g_i(\bt_i)$ is given by the composition of a convex function and an affine mapping, $g_i(\bt_i)$ is a convex function (see \cite{Boyd}) for all $i=1,\ldots,{\rm card}(\mathcal{E}_e^c)$. The overall sum function $g(\bt_c)$ is then convex, thus making the first constraint in $[P.8]$ a convex set.
The optimization set is then convex since it is given by the intersection of convex sets. Consider now the objective function of $[P.8]$, which reads as:
\begin{equation}\label{f}
f(\bt_c)=\sum_{i=1}^{{\rm card}(\mathcal{E}_e^c)}\frac{\displaystyle N'_{i}\sum_{k=1}^K \frac{e^{t^k_i}-1}{a_k}}{\displaystyle\sum_{k=1}^K t^k_i}=\sum_{i=1}^{{\rm card}(\mathcal{E}_e^c)} w_i(\bt_i).
\end{equation}
Before proceeding with the proof, we introduce some definitions on generalized convexity which are instrumental to prove that every stationary point is a global minimum. A differentiable function $f(\bt_c)$ is pseudoconvex at a point $\bt^{0}_c$ \cite{Mangasarian}, if
\beq
(\bt_c-\bt^{0}_c)^T \nabla f(\bt^{0}_c)\geq 0 \quad \Rightarrow \quad f(\bt_c)\geq f(\bt^{0}_c) \label{pseud_def}
\eeq
for any $\bt_c \in \mbox{dom}\, f$. In \cite{Mangasarian}[pag. 142], it is proved that, if $f$ is pseudoconvex at $\bt^{0}_c$, then having $\nabla f(\bt^{0}_c)=0$ implies that $\bt^{0}_c$  is a globally optimal point. Hence, if $f$ is pseudoconvex at every stationary point, then every stationary point is a global minimum. To exploit this property, we need to prove that $f(\bt_c)$  is  pseudoconvex at every stationary point. More specifically,  consider $f(\bt_c)=\ds \sum_{i=1}^{{\rm card}(\mathcal{E}_e^c)}w_i(\bt_i)= \ds \sum_{i=1}^{{\rm card}(\mathcal{E}_e^c)} \ds \frac{h(\bt_i)}{v(\bt_i)}$
where $h(\bt_i)=\displaystyle N'_{i}\sum_{k=1}^K \frac{e^{t^k_i}-1}{a_k}$ and $v(\bt_i)=\displaystyle\sum_{k=1}^K t^k_i$.
Let us now show that each function $w_i(\bt_i)$ is pseudoconvex  $\forall \, \bt_i \in \mbox{dom}\, w_i$.
Assume that $(\bt_i-\bt^{0}_i)^T \nabla w_i(\bt^{0}_i)\geq 0$, i.e.
\beq
(\bt_i-\bt^{0}_i)^T \left(\ds \frac{\nabla h(\bt^0_i)}{v(\bt^{0}_i)}- \ds \frac{h(\bt^0_i)}{v^2(\bt^0_i)}\ds \nabla v(\bt^0_i)\right) \geq 0.
\label{grad_g1}
\eeq
Since $h(\bt_i)$ is a differentiable convex function and $v(\bt_i)$ is linear, the
following conditions hold:
\begin{align}
 h(\bt_i) &\geq  h(\bt^{0}_i)+ \nabla^T h(\bt^{0}_i) (\bt_i-\bt^{0}_i) \\
 v(\bt_i) &= v(\bt^{0}_i)+ \nabla^T v(\bt^{0}_i)(\bt_i-\bt^{0}_i),
\end{align}
for all $\bt^{0}_i, \bt_i  \in \mbox{dom}\, w_i$. Hence, by using these  inequalities in (\ref{grad_g1}) and since $v(\bt_i)>0$, it results
\beq
\begin{array}{lll}
\left(\ds \frac{h(\bt_i)}{v(\bt^{0}_i)}-
 \ds \frac{h(\bt^{0}_i)}{v^2(\bt^{0}_i)}v(\bt_i) \right) \geq 0
\label{grad_g}
\end{array}
\eeq
i.e.
\beq
w_i(\bt_i)=\frac{h(\bt_i)}{v(\bt_i)}\geq \frac{h(\bt^{0}_i)}{v(\bt^{0}_i)}=w_i(\bt^{0}_i)
\label{wi}
\eeq
for any $\bt_i, \bt^{0}_i \in \mbox{dom} \, w_i$. Then, from (\ref{wi}) and (\ref{pseud_def}), we can state that $w_i(\bt_i)$ is a pseudoconvex function for every $i=1,\ldots, {{\rm card}(\mathcal{E}_e^c)}$. Let us now prove that  $f(\bt_c)= \sum_{i=1}^{{\rm card}(\mathcal{E}_e^c)} w_i(\bt_i)$ is a pseudoconvex function at every stationary point $\bt^{0}_c$. Hence assume that  $(\bt_c-\bt^{0}_c)^T \nabla f(\bt^{0}_c)\geq 0$,
where $\bt^{0}_c=[\bt^{0}_1,\ldots, \bt^{0}_{{\rm card}(\mathcal{E}_e^c)}]$ is a stationary point of $f(\bt_c)$.
Let us denote with $\mathcal{S}=\{\bt^{0}_c \in \mbox{dom} f : \; \nabla f(\bt^{0}_c)=0 \}$ the set of the stationary points of  $f(\bt_c)$.
Observe that since the functions $w_i(\bt_i)$ are uncoupled, the set $\mathcal{S}$
 is equal to the cartesian product of the sets  $\mathcal{S}_i$ of the stationary points of each function $w_i(\bt_i)$, i.e. $\mathcal{S}=\mathcal{S}_1\times \mathcal{S}_2 \times \ldots \times \mathcal{S}_{{\rm card}(\mathcal{E}^c_e)}$.
 Hence from the pseudoconvexity of each function $w_i(\bt_i)$ we can state that
 \beq
 (\bt_i-\bt^{0}_i)^T \nabla w_i(\bt^{0}_i)=0 \; \Rightarrow \; w_i(\bt_i)\geq w_i(\bt^{0}_i) \; \forall \bt_i \in \mbox{dom} \;w_i
 \eeq
so that $ \ds \sum_{i=1}^{{\rm card}(\mathcal{E}_e^c) }w_i(\bt_i)\geq \ds \sum_{i=1}^{{\rm card}(\mathcal{E}_e^c) } w_i(\bt^{0}_i)$.
Hence, at each stationary point $\bt^{0}_c$, we get
\beq
(\bt_c-\bt^{0}_c)^T \nabla f(\bt^{0}_c)=0\; \Rightarrow f(\bt_c)\geq f(\bt^{0}_c)
\eeq
for any $\bt_c \in \mbox{dom}\,f$. Then, from (\ref{pseud_def}), $f$ is pseudoconvex at every stationary point, thus ensuring that every stationary point of $f$ is a global minimum. This completes the proof of point (b).



\balance

\bibliographystyle{MyIEEE}
\bibliography{bmc_article}

\end{document}